\newcommand{\avtor}{Aleksandar Simevski}
\newcommand{\naslov}{Theoretical Aspects of a Design Method for Programmable NMR Voters}
\newtheorem{property}{Property}
\newtheorem{corollary}{Corollary}
\newtheorem{ex}{Example}
\begin{document}
\title{\naslov} 

\author{\IEEEauthorblockN{Elena Hadzieva}
\IEEEauthorblockA{\\\textit{University of Information Science and Technology\\"St. Paul the Apostle"
\\Partizanska bb, 6000 Ohrid, Macedonia\\elena.hadzieva@uist.edu.mk}}

\and 
\vspace{0.05in}
\IEEEauthorblockN{\avtor}
\IEEEauthorblockA{\\\textit{IHP GmbH\\Im Technologiepark 25, D-15236 Frankfurt (Oder), Germany\\simevski@ihp-microelectronics.com}}


}


\maketitle

\begin{abstract} 

Almost all dependable systems use some form of redundancy in order to increase fault-tolerance. Very popular are the $N$-Modular Redundant (NMR) systems in which a majority voter chooses the voting output. However, elaborate systems require fault-tolerant voters which further give additional information besides the voting output, e.g., how many module outputs agree. Dynamically defining which set of inputs should be considered for voting is also crucial. Earlier we showed a practical implementation of programmable NMR voters that self-report the voting outcome and do self-checks. Our voter design method uses a binary matrix with specific properties that enable easy scaling of the design regarding the number of voter inputs N. Thus, an automated construction of NMR systems is possible, given the basic module and arbitrary redundancy $N$. In this paper we present the mathematical aspects of the method, i.e., we analyze the properties of the matrix that characterizes the method. We give the characteristic polynomials of the properly and erroneously built matrices in their explicit forms. We further give their eigenvalues and corresponding eigenvectors, which reveal a lot of useful information about the system. At the end, we give relations between the voter outputs and eigenpairs.

\end{abstract} 
\IEEEpeerreviewmaketitle

\section{Introduction}\label{sec_introduction} 

Widely used scheme for increasing system dependability is $N$-Modular Redundancy (NMR). Fig.~\ref{fig_nmr_sys} presents an NMR system. The $N$ identical modules $M_{0},M_{1},\dotsc,M_{N-1}$ fed with the same input $z$ are expected to produce equal outputs $x_{0},x_{1},\dotsc,x_{N-1}$. However, in a real system the modules are subject to faults that lead to differences in these outputs. Therefore, a decision maker D selects the final output of the system $y$. One of the most frequently used decision makers is the majority voter, where at least $\lfloor N/2+1 \rfloor$ outputs of the $N$ modules have to be equal.

\begin{figure}[!ht]
    \centerline{\includegraphics[width=0.5\linewidth]{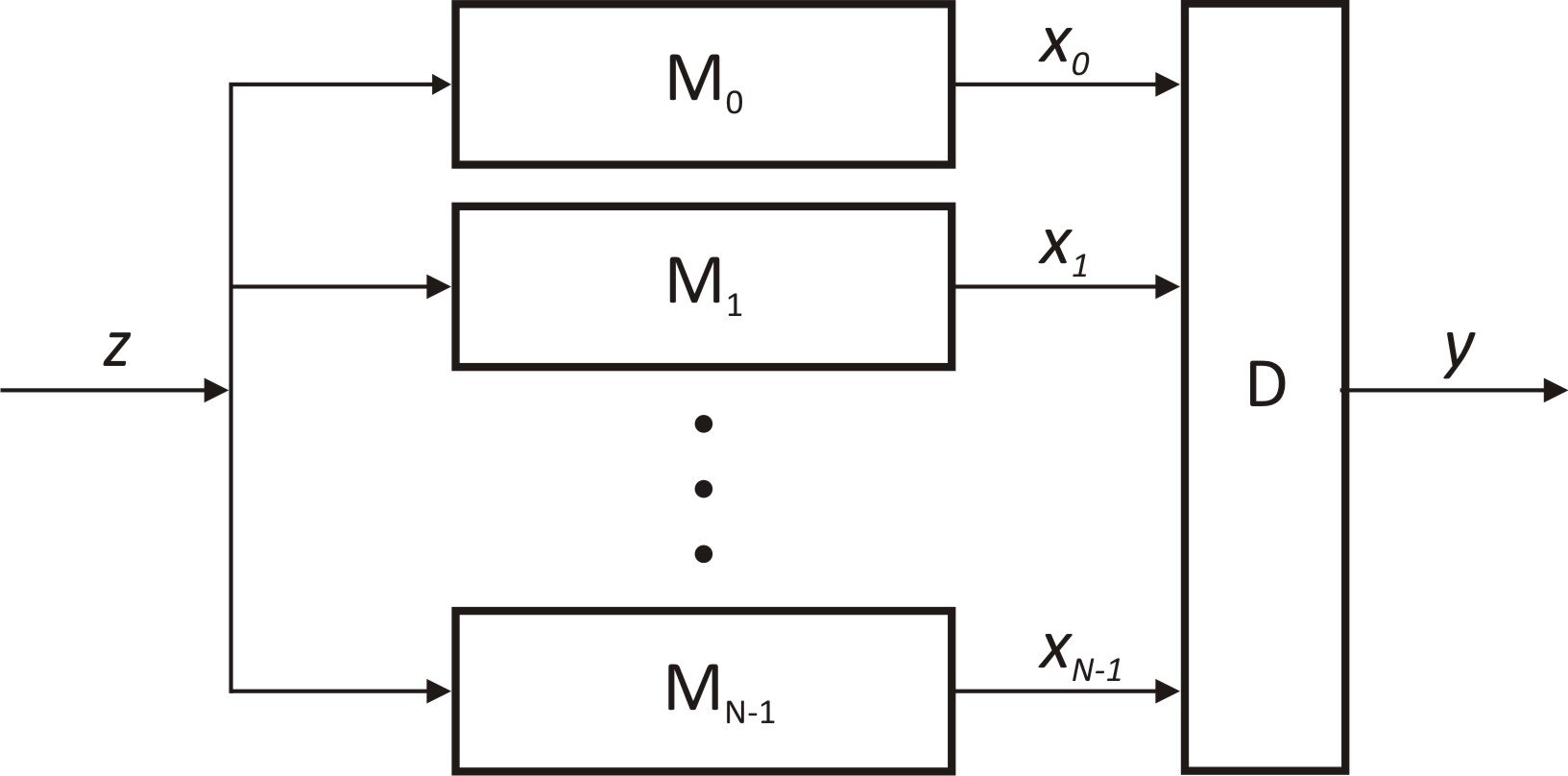}}
    \caption{NMR system}
    \label{fig_nmr_sys}
\end{figure}

Dependable systems employ some form of redundancy (time, space, information) which affect other system properties such as performance, power consumption or complexity (cost). A trade-off is therefore necessary. However, intelligent mechanisms may enable a dynamic trade-off, i.e., increase dependability and performance or lower power consumption on demand. Consider the dependable 4MR system depicted in Fig.~\ref{fig_dep_sys} as an example. The system acquires information by four identical sensors measuring the same physical quantity. This information is further processed by four processors that output the results $x_{0},x_{1},x_{2}$ and $x_{3}$.

\begin{figure}[!ht]
    \centerline{\includegraphics[width=0.5\linewidth]{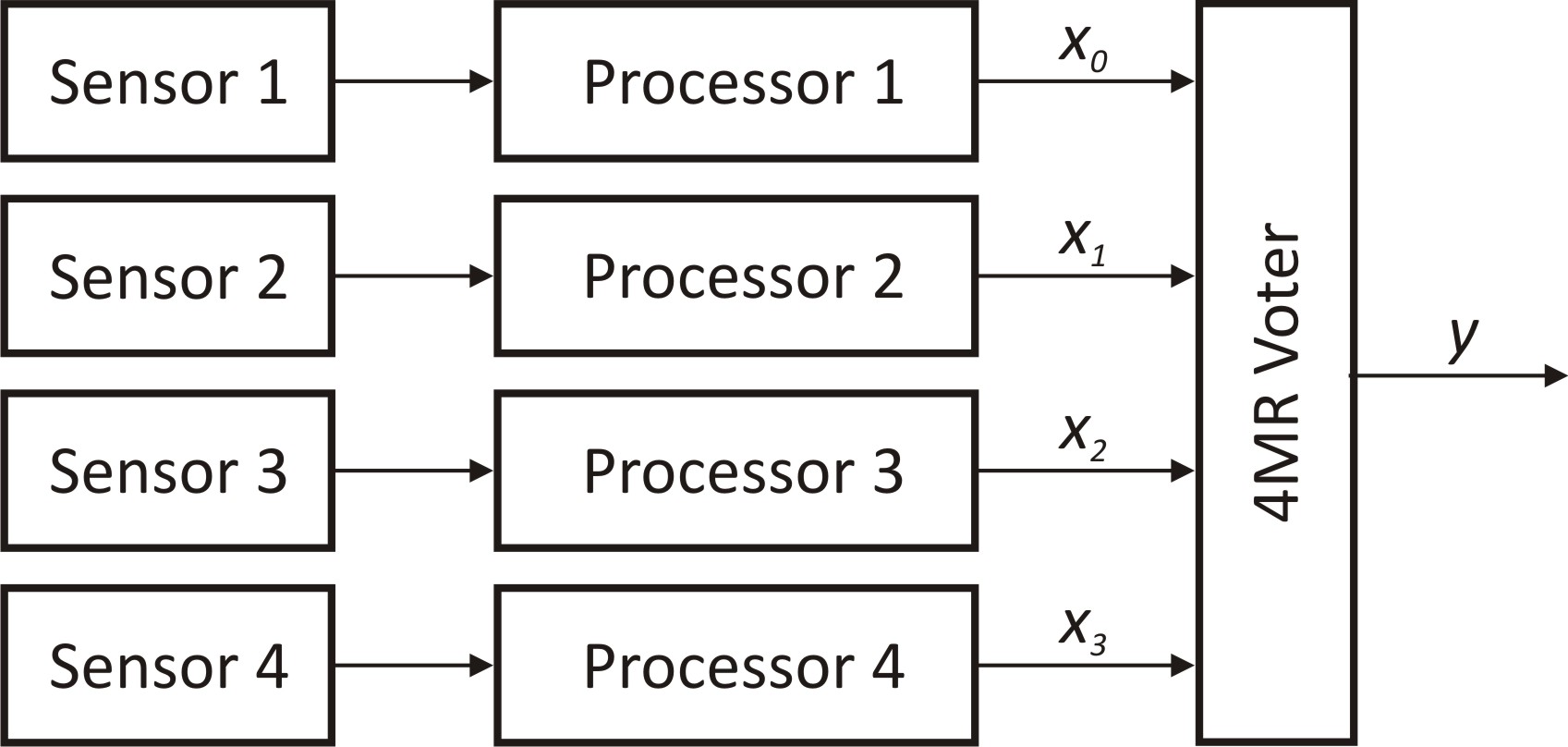}}
    \caption{Dependable 4MR system}
    \label{fig_dep_sys}
\end{figure}

By observing the responses of the processors over a period of time, the system could differentiate between permanent and transient faults in the processor-sensor pairs. Thus, if the system detects a permanent fault in one of the processor-sensor pairs, it may decide to switch them off in order to save power. Furthermore, consider the following NMR on demand (NMROD) adaptive behavior. Normally, only two processor-sensor pairs are operating in a dual-modular redundant (DMR) fashion. The power supply is switched off for all other pairs. As long as the two results are equal, the output of voting is equal to the results and the operation is considered error-free. A single disagreement between the two operating pairs is a signal for the system to power up a third pair and restart the operation in a triple-modular redundant (TMR) fashion. The fourth pair could be included only in critical situation when faults frequently occur, otherwise the system may opt  switching back to DMR. Besides the output of voting $y$, these example systems have to know exactly which processor-sensor pairs disagree, as well as the total number of pairs that disagree. Furthermore, they require dynamically building 1MR to NMR systems with any possible combination of processor-sensor pairs. In this discussion we have assumed that the voter itself is not subject to faults. However, system operation is compromised if a fault occurs in the voter. Therefore, it is preferable to have some dependability mechanisms which detect and report incorrect voter operation, or if possible, mask the errors.

So far we have illustrated our motivation for a special type of decision maker -- a programmable NMR voter with self-report and self-checking capabilities that is suitable for all the scenarios discussed previously. These voters describe the situation at their inputs, e.g., which modules disagree. Moreover, they could be dynamically programmed in order to form different NMR systems on the fly. In \cite{Simevski2012a} we show an intuitive method for designing such type of voters as well as the results from their actual implementation. Furthermore, we use these voters in order to investigate a dynamic scheme of core-level NMR in multiprocessors~\cite{Simevski2014a}. In this paper, we present the theoretical aspects of the method and we formally prove our assumptions. This is important since the method enables automated construction of elaborate NMR systems. That is, given the basic module (e.g., a single processor-sensor pair) and arbitrary redundancy $N$, the whole system could be built automatically. We present technical details of a register-transfer level NMR system generator in~\cite{Simevski2013a}.

The rest of the paper is organized as follows. Section~\ref{sec_related} presents related work. In Section~\ref{sec_definitions} we give a complete formal specification of our voters as well as some basic definitions that we use in the following Sections. We describe the method in Section~\ref{sec_design} and give its formal description and proofs of  properties in Section~\ref{sec_results}. The conclusion is in Section~\ref{sec_conclusion}.

\section{Related work}\label{sec_related}

A totally self-checking TMR system with concurrent error location capability is presented in~\cite{Jiang1999}. The system determines whether an error occurred during voting as well as its location. The error coverage is 100\%, i.e., the error can be detected in the redundant modules, the voter, or the error-checking circuit. The work is compared to a similar scheme proposed in~\cite{Gaitanis1988}. Yet another technique for increasing the reliability of NMR voters based on error correction by Alternate-Data Retry is introduced in~\cite{Takaesu2004}.

While the focus in~\cite{Jiang1999}, \cite{Gaitanis1988} and~\cite{Takaesu2004} is locating the error by using special circuits that observe the outputs of the redundant modules and the voter, our primary target is establishing a design method for programmable NMR voters which besides self-checks, output additional information for the state of their inputs. In particular, here we pay special attention to the mathematical analysis of this method in order to confirm its validity and importance, and enhance its capabilities.

Design of a reconfigurable NMR system is introduced in~\cite{Lo1990}. The design method enables scalability regarding the number of redundant modules $N$ and adaptability. Moreover, the authors in~\cite{Ruiz2008} present a strategy for automated generation of redundant modules and a corresponding majority voter. On the other side, the method that we present here enables not only simple but also elaborate NMR system generation (such as dynamic NMROD), using special NMR voters.

Dependability and performance analyses of NMR systems are given in~\cite{Srihari1982, Koren1979, Beaudry1978}, while dependability modeling of NMROD systems is found in~\cite{Al-Hashimi2001, Lombardi2001}.

\section{Basic definitions and voter specification}\label{sec_definitions}

An NMR system is practically determined by the properties and characteristics of the decision maker. As said, the most freqently used decision makers are various types of voters. We first give some basic definitions and make a short voter classification in order to set the frame for the following Sections. Then, we specify our type of voter.

Let the set of inputs of an NMR voter be $\mathcal{A}=\{x_0,x_1,\dotsc,x_{N-1}\}$. The absolute difference between the input values $x_i$ and $x_j$ wil be denoted by $\delta_{ij}$, i.e. $\delta_{ij}=\left|x_j-x_i\right|$. Exact voting algorithms consider $x_i$ and $x_j$ equal only if $\delta_{ij} = 0$, while inexact voting algorithms allow defining a $\sigma$ parameter and consider $x_i$ and $x_j$ equal if $\delta_{ij} < \sigma$. At last, approved voting algorithms define a set or range of approved input values. The voter considers $x_i$ and $x_j$ equal if they belong to the defined set/range. A complete voter classification with in-depth analyses is given in~\cite{Parhami1994}. Generally, the voters are marked by an $M$-out-of-$N$ label denoting that voting is successful if there are at least M equal inputs of the $N$ inputs in total. If $M\leq N/2$, ambiguous situations may occur since more than one input values could be legitimate candidates for the voting output.

Although in this paper we mainly assume an exact 1-out-of-$N$ voter, the design method is general and could be applied for almost any voter type. Fig.~\ref{fig_voter} depicts our voter which reports its state and checks its own operation.

\begin{figure}[!h]
    \centerline{\includegraphics[width=0.5\linewidth]{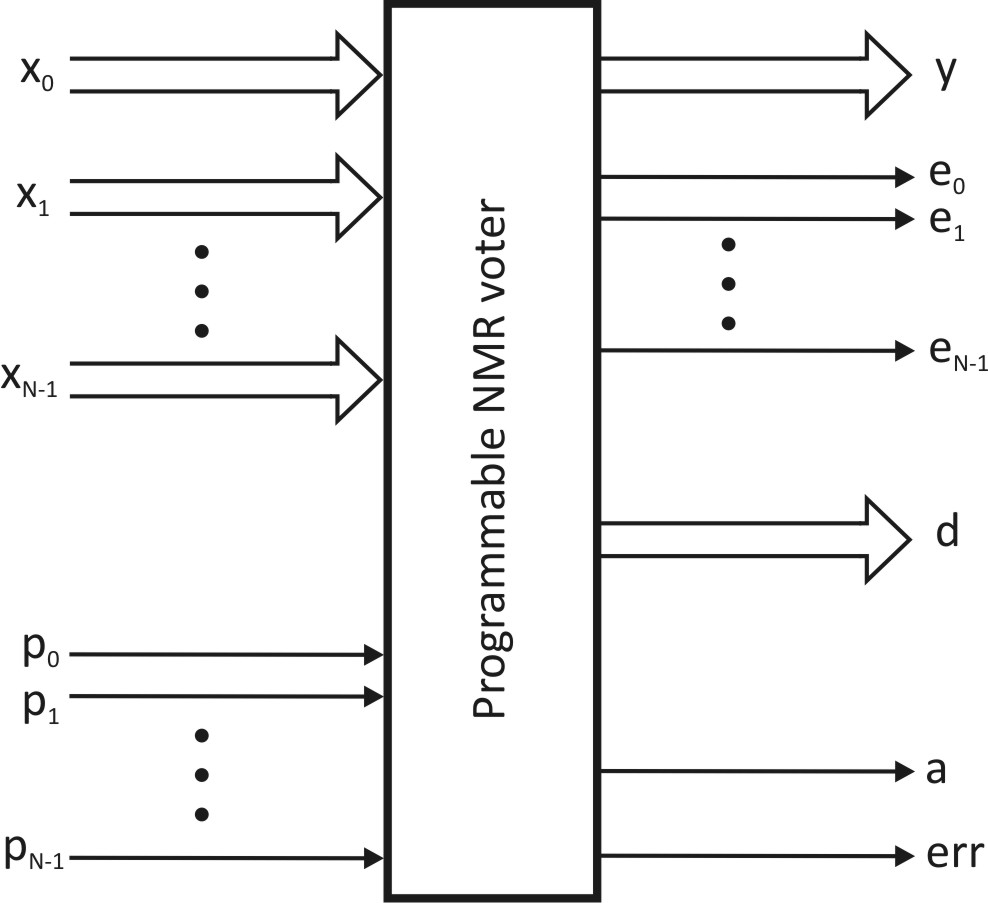}}
    \caption{Programmable NMR voter with self-report and self-checks}
    \label{fig_voter}
\end{figure}

The voting output $y$ is equal to $x_i$, where $x_i$ is in the largest group of equal inputs. The $d$ output gives the total number of inputs which differ from $y$. Equivalently, the voter could use an $eq$ output, which gives the total number of inputs that are equal to the output of voting $y$. Outputs $e_i,\quad i=0,1,\dots,N-1$ specify exactly which input $x_i$ equals to $y$, i.e. $e_i=1$, if $x_i=y$, and $e_i=0$ if $x_i\not=y$. The $a$ output signals ambiguous input situations where $y$ could be equal to any of the legitimate candidates for the voting outputs. The $err$ output signals an unsuccessful self-check. Actually, outputs $d$ (or $eq$), $e_i,\quad i=0,1,\dots,N-1$, and $a$ describe what is happening at the voter inputs. We refer to these outputs as the Input State Descriptor (ISD).

Furthermore, the voter is programmable. Each of the $x_i$ inputs could be dynamically programmed to be an \textit{active input}, through a special signal $p_i$. Each active input is included in the voting process, while all inactive inputs are excluded. This enables dynamically forming NMR systems (varying $N$) with any possible input combination. For example, defining $p_1=p_2=active$ and $p_0=p_3=inactive$ in a 4MR system, transforms the system to a 2MR system taking into account only modules 1 and 2; modules 0 and 3 do not participate in the voting process; later, however, module 3 may be included -- making a 3MR system. The programming signals imply two special configurations. Firstly, if no inputs are active, a 0MR system is formed, which is illegal. In this situation, the voter outputs are undefined. Secondly, if only one input is defined as active (1MR system), $y$ is always equal to the active input $x_i$. Thus, at least one input should be active for proper operation.

At last, but not least important is that the method enables scaling. That is, the complete voter with interface as in Fig.~\ref{fig_voter} could be generated solely by specifying the $N$ parameter. Furthermore, the design method is general in the sense that the specific implementation could be done either in hardware, software or any other technology. In~\cite{Simevski2012a} and~\cite{Simevski2013b} we show hardware and software realization, respectively.

\section{Voter design method}\label{sec_design}

Our method is based on a binary matrix that reflects the equal inputs of the voter. The matrix enables determining the voting output and the ISD, and performing self-checks. 

\subsection{Matrix construction}\label{subsec_matrix}

Inherently, the set of voter inputs $\mathcal{A}$ might contain repeatable elements. Let $\mathcal{A}$ contain $m\leq N$ different elements. If some element $x\in \mathcal{A}$ is repeated $k$ times in $\mathcal{A}$, then we say that \textit{the frequency of $x$ in $ \mathcal{A}$ is $ k$} (or simply, \textit{the frequency of $x$ is $k$}). Let $ f_1, f_2, \ldots, f_m$ are all possible frequencies of the elements of $\mathcal{A}$; then $f_1+f_2+\ldots+f_m=N.$

We construct the matrix $\mathbf{A}=[A_{ij}]_{N\times N}$, corresponding to the set $\mathcal{A}$, as follows:

$$A_{ij}=\left\{\begin{array}{ll}1,& x_i=x_j\\0, & x_i\not= x_j, \end{array}\right. \quad i,j=\overline{0,N-1}$$

(In the further text, we use the shorter notation $i=\overline{r_l,r_h}$ to express an integer index range from $r_l$ to $r_h$ with step 1. E.g., $i=\overline{0,N-1}$ instead of $i=0,1,\dots,N-1$.)

By  its definition, the matrix $\mathbf{A} $ is symmetric, with ones on its main diagonal.  If all voter inputs are different one from each other than $\mathbf{A}$ equals the identity matrix. In the opposite case, if all voter inputs are the same, than all matrix elements are ones. Additionally, the matrix $\mathbf{A}$ represents the relation ``='', defined on the set $\mathcal{A}$. As such, the matrix $\mathbf{A}$ represents equivalence  relation (that is, reflexive, symmetric and transitive).

\medskip
\begin{ex}\label{ex_matrix_build}
Let $N=4$, $x_0=20$, $x_1=30$, $x_2=20$, $x_3=10$, and all inputs are  active ($p_i=1$). The set $\mathcal{A}=\{20, 30, 20, 10\}$ has three different elements ($m=3$); their frequencies are $2,1,1$. The corresponding matrix would be: \[\mathbf{A}=\left[\begin{array}{cccc} 1&0&1&0\\0&1&0&0\\1&0&1&0\\0&0&0&1\end{array}\right]\,.\]
\end{ex}

\medskip

The voter has additional $N$ input signals that define which of its inputs  are active, thus dynamically setting the voter as 2MR, 3MR, \dots, NMR voter. The question is, how is this reflected into the $\mathbf{A}$ matrix. We consider each inactive input $x_i$ to be different from all other inputs $x_j, \, j\ne i$. Thus, $A_{ij}=A_{ji}=0, \forall j\ne i$ and $ A_{ii}=1$.

\subsection{Construction of ISD}\label{subsec_isd}

Taking into consideration that the matrix $\mathbf{A} $ is symmetric, with main diagonal of ones, all information about the input state can be obtained from the elements above (or below) its main diagonal  i.e., the elements $A_{ij},\, i=\overline{0,N-2},\, j=\overline{i+1,N-1}$. The elements of the matrix above the main diagonal from Example~\ref{ex_matrix_build} are: \[\begin{array}{ccc}0&1&0\\&0&0\\&&0\end{array}\]

By simply filling the missing $\displaystyle\frac{(N-1)(N-2)}{2}$ places with zeros (in the present case there are 6 such places, since $N=4$), we get a reduced $(N-1)\times(N-1)$ matrix: \[\mathbf{A}_R=\left[\begin{array}{ccc} 0&1&0\\0&0&0\\0&0&0\end{array}\right]\] where we preserve the enumeration of rows and columns ($i=\overline{0,N-2}$ and $j=\overline{i+1,N-1}$) as in the $\mathbf{A}$ matrix for the elements above the main diagonal.

Now the ISD (signals $d$, $e_i$ and $a$) could be simply determined as follows.
\begin{equation}\label{d}
d=\displaystyle\min_{i=0,1,\dotsc,N-2}|\{A_{ij}|A_{ij}=0,j=\overline{i+1,N-1}\}|,
\end{equation}
where the notation $|X|$ is used for the cardinality of a given set $X$. Actually, to find $d$, we search all (incomplete) rows, $i=\overline{0,N-2},\, j=\overline{i+1, N-1},\,$ of the $\mathbf{A}_R$ matrix, to find a row $i=I$, with the smallest number of zeros; then $d=|\{A_{Ij}|A_{Ij}=0,j=\overline{i+1,N-1}\}|$ and $eq=N-d$. We assign $y=x_{I}$, $e_{j}=1$ for $j=I$ and $e_j=0$, for $j<I$. Passing through columns $j=\overline{I+1,N-1}$ of row $I$ we determine $e_j$, for $j > I$, with $e_j=A_{Ij}$. The ambiguous signal $a$  is set to 1 if more than one (incomplete) row with the same, smallest number of zeros are encountered, otherwise $a=0$.

For instance, the smallest number of zeros in Example~\ref{ex_matrix_build} is in row $i=0$. Thus, $y=x_0=20$, $d=2$ (two zeros in row $i=0$), $eq=2$, $e_0=1$. Passing through row 0 we determine $e_j$ for $j > 0: e_1=A_{01}=0$, $e_2=A_{02}=1$, $e_3=A_{03}=0$, i.e., we distinguish which inputs are equal to the output of voting $y$. The ambiguous signal is zero ($a=0$) since we have a single row in the matrix with the smallest number of zeros.

Probing many examples (with many variations of input sets) we observed that $eq$ is exactly equal to the largest eigenvalue of the $\mathbf{A}$ matrix and that all other eigenvalues are integers. The question is, does this property hold in general, and, is it possible to prove it?

\subsection{Construction of self-checks}\label{subsec_self_checks}

In our previous paper \cite{Simevski2012a} self-check construction was based on violations of the transitivity property of the equivalence relation represented by the matrix $\mathbf{A}$. More precisely, it was based on the misrepresentation in the matrix $\mathbf{A}$ (consequently in the matrix $\mathbf{A}_R)$ of the following obvious property of the relation ``='' defined on the set of voter inputs.

\begin{equation}\label{eq_syllogism}
\mbox{if}\quad x_i=x_j\wedge x_j=x_k\quad\mbox{then}\quad x_i=x_k.\end{equation}
\begin{ex}\label{ex_self_checks}
For $N=4$, $x_0=x_1=x_2\neq x_3$ \[\mathbf{A}_R=\left[\begin{array}{ccc} 1&1&0\\0&1&0\\0&0&0\end{array}\right].\]

\noindent Suppose that $\mathbf{A}_{R12}=A_{12}$ is set to 0 instead of 1.  That is $A_{12}=A_{21}=0$, respectively. The matrices would become\[\mathbf{A}_R=\left[\begin{array}{ccc}1&1&0\\0&0&0\\0&0&0\end{array}\right],\quad\mathbf{A}=\left[\begin{array}{cccc}1&1&1&0\\1&1&0&0\\1&0&1&0\\0&0&0&1\end{array}\right].\]

\noindent Here, it is obvious that in the matrix representation of the method, the transitivity is violated. The $\mathbf{A}$ matrix simultaneously states that $x_0=x_1$, and $x_0=x_2$, but also that $x_1 \neq x_2$.
\end{ex}

\medskip
When the transitivity is violated as described in the example \ref{eq_syllogism}, we say that the matrix is \textit{erroneously built}. Erroneously built matrices indicate one or more such errors in voting. The voter could use these matrices to do self-checks. For instance, the voter could check if transitivity~(\ref{eq_syllogism}) is satisfied for each $i=\overline{0,N-2}$ of the $\mathbf{A}_R$ matrix, and each $j$ and $k$ where $j>i$ and $k>j$. If the self-check passes, then $err=0$ else $err=1$. Nevertheless, this simple check of transitivity violation does not mean that the voter is 100\% operating correctly. It only tells if an error is present in the matrix information or not. In other words, errors in the voter parts that later use the matrix information may not be caught without additional checks.

As we did for properly built matrices, here too, we examined the eigenvalues of erroneously built matrices. In this case, all experiments indicated that they always have a non-integer eigenvalue. Thus, a challenge to deeply analyze the basic matrix upon which our method is built, was posed. 


\section{Theoretical aspects}\label{sec_results}

\newcommand{\be}{\begin{equation}}
\newcommand{\ee}{\end{equation}}
\newcommand{\ds}{\displaystyle}


In this Section we give proofs of assertions that were stated intuitively in~\cite{Simevski2012a} and state and prove new assertions related to the matrix.The Section is divided into three Subsections which treat the properties of a properly and an erroneously built matrix, as well as the relations of the characteristics of these matrices with the voter outputs.

\subsection{Characteristics of a properly built matrix}\label{properly_matrix}
\medskip







Some of the obvious properties of the properly built  matrix $\mathbf{A}$  were stated right after its definition. Another straightforward property is that the matrix has only real eigenvalues, since it is real symmetric and therefore  Hermitian. We found that the eigenvalues and eigenvectors of the $\mathbf{A}$ matrix give a lot of information about the NMR voter.



\medskip
Recall that $ f_1, f_2,\ldots, f_m$ are all possible frequencies of the elements of  $ \mathcal{A}$. Generality is preserved if $ f_1 \geq f_2 \geq \ldots \geq f_m.$ The following property serves as a basis for deriving the next few properties.

\medskip
\begin{property}\label{prop_similar}
The $\mathbf{A}$ matrix is similar to the block -- matrix
$\overline{\mathbf{A}}=diag \{\mathbf{1}_{f_1}, \mathbf{1}_{f_2},
\ldots, \mathbf{1}_{f_m}\}$, where $\mathbf{1}_{f_i}$ is
$f_i\times f_i$ matrix whose elements are all ones.
\end{property}

\begin{proof}\label{proof_similar}
The $\mathbf{A}$ matrix is similar to the $\overline{\mathbf{A}}$
matrix, by the similarity  transformation $\mathbf{P}$, where
$\mathbf{P}$ is a product of a finite number of permutation matrices.
\end{proof}

Note that the matrix $\bar{\mathbf{A}}$ represents the same set of voter inputs, but with reordered elements. The elements are listed such that the set starts with the same elements with highest frequency, followed by the same elements with non-increasing  frequencies.  

\medskip
\begin{ex}\label{ex_similar}
For the matrix from Example~\ref{ex_matrix_build}, we  have:
\[\left[\begin{array}{cccc}1&1&0&0\\1&1&0&0\\0&0&1&0\\0&0&0&1\end{array}\right]=
\mathbf{P}_{12}^{-1}\cdot\left[\begin{array}{cccc}1&0&1&0\\0&1&0&0\\1&0&1&0\\0&0&0&1\end{array}\right]\cdot\mathbf{P}_{12},$$
where
$$\mathbf{P}_{12}=\mathbf{P}_{12}^{-1}=\left[\begin{array}{cccc}1&0&0&0\\0&0&1&0\\0&1&0&0\\0&0&0&1\end{array}\right]\,,\]
i.e.,
$$\overline{\mathbf{A}}=\mathbf{P}_{12}^{-1}\mathbf{A}\mathbf{P}_{12}.$$
Here, the permutation matrix $\mathbf{P}_{12}$ is obtained from
$4\times 4$ identity matrix, by interchanging 1-st and 2-nd row.
Multiplication $\mathbf{P}_{12}^{-1}\cdot \mathbf{A}$ interchanges
1-st and 2-nd rows of $\mathbf{A}$ and multiplication by
$\mathbf{P}_{12}$ interchanges 1-st and 2-nd column of
$\mathbf{P}_{12}^{-1}\cdot\mathbf{A}$. In this way, we get the
block -- matrix $\overline{\mathbf{A}}$, with blocks of
ones on the main diagonal, with sizes $2\times 2$, $1\times 1$ and
$1\times 1$. The sizes of the blocks are exactly equal to the frequencies of the elements of
$\mathcal{A}$.
\end{ex}

\medskip
It is easy to see (according to the Sylvester criterion) that $\overline{\mathbf{A}}$ is positive semi-definite. So is the $\mathbf{A}$ matrix~\cite{Meyer2000}, which implies that their eigenvalues are non-negative. So far, we know that the eigenvalues of $\mathbf{A}$ are non-negative reals. The following properties reveal, step-by step, the whole spectrum of $\mathbf{A}$.

%

\medskip
Let $\sigma(\mathbf{A})$ and $\rho(\mathbf{A})$ denote the spectrum and spectral radius of $\mathbf{A}$, respectively.

\medskip
\begin{property}\label{prop_nula} $0\in\sigma(\mathbf{A})$.\end{property}

\begin{proof}\label{proof_nula}
Two similar matrices have the same determinant, thus
$$\det\mathbf{A} =\det\overline{\mathbf{A}}=\det\mathbf{1}_{f_1}\cdot\det\mathbf{1}_{f_2}\cdot\ldots\cdot\det\mathbf{1}_{f_m}=0.$$
The relation $\det\mathbf{A}=0$ implies that $0$ is an eigenvalue
of $\mathbf{A}$.
\end{proof}

\medskip
\begin{property}\label{theo_max}
$\rho (\mathbf{A})=f_1$. Moreover, $\rho(\mathbf{A})\in \sigma (\mathbf{A})$, i.e., the largest frequency of the elements of $ \mathcal{A}$ is an eigenvalue of $\mathbf{A}$.
\end{property}

\begin{proof}\label{proof_max}
(This proof is different than the one given in~\cite{Simevski2012a}.) As concluded before, all of the eigenvalues of $\mathbf{A}$, $\lambda_i $, $ 1\leq i\leq N$ are non-negative, so its spectral radius \be\label{ro}\rho(\mathbf{A})=\displaystyle\max_{i=1,2,\ldots,N}|\lambda_i|=\max_{i=1,2,\ldots, N}\lambda_i.\ee
The inequality \be\rho(\mathbf{A})\leq ||\mathbf{A}||\ee holds for every norm of $\mathbf{A}$ (\cite{Meyer2000}, p.~497). If we choose $||\cdot ||_1$ -- norm, then we obtain \be\label{eq_pomalo}\rho(\mathbf{A})\leq ||\mathbf{A}||_1=\max_i\sum_j|A_{ij}|=f_{1},\ee since the largest absolute row sum in $\mathbf{A}$ is $f_1$.

On the other hand, $f_1$ is an eigenvalue of $\mathbf{A}$, since for the $N$ -- dimensional  vector $$\mathbf{y}_1=[\underbrace{1\,\,1\,\,\ldots\,\,1}_{f_1}\,\,0\,\,0\,\,\ldots\,\,0]$$ the equality $$\overline{\mathbf{A}}\mathbf{y}_1=f_1\mathbf{y}_1$$ is satisfied. (We use the fact that the similar matrices $\mathbf{A}$ and $\overline{\mathbf{A}}$ have  the same eigenvalues.)

Taking into account the fact that $f_1$ is an eigenvalue of $\mathbf{A}$, and the relations (\ref{ro}) and (\ref{eq_pomalo}), we conclude  that the maximal eigenvalue of $\mathbf{A}$ is $f_1=\rho(\mathbf{A})$.
\end{proof}


\medskip
\begin{property}\label{theo_fr}
All frequencies of the elements of $\mathcal{A}$ are eigenvalues of the $\mathbf{A}$ matrix.
\end{property}

\begin{proof}\label{proof_fr}
It is enough to show that $f_2, f_3,\ldots, f_m$ are eigenvalues of the $\overline{\mathbf{A}}$ matrix.

For $i=2,3,\ldots, m$, we define $N$-dimensional vectors $\mathbf{y}_i$ with $$ \mathbf{y}_i = [\underbrace{0\,\,0\,\,0\,\ldots\,0\,\,0\,\,0}_{f_1+f_2+\ldots+f_{i-1}}\,\,\underbrace{1\,\,1\,\ldots\,1\,\,1}_{f_i}\,\,0\,\,\ldots\,0]^T.$$

It is easy to check that $$\overline{\mathbf{A}}\mathbf{y}_i = f_i\mathbf{y}_i,\quad i=\overline{2,m}.$$
\end{proof}

The proofs of Properties~\ref{theo_max} and~\ref{theo_fr} contain
explicit  formulas of the eigenvectors of $\overline{\mathbf{A}}$,
that correspond to the eigenvalues of $ \overline{\mathbf{A}}$,
i.e., the eigenpairs of $\overline{\mathbf{A}}$ are $
(f_i,\mathbf{y}_i), i=\overline{1,m}$. What about the
eigenpairs of $\mathbf{A}$? Of course, the eigenvalues are the
same. We denote the eigenvectors of $\mathbf{A}$ corresponding to the
eigenvalues $f_1, f_2,\ldots, f_m$, by
$\mathbf{y}^{f_i}$ and give their description in the following property. 

\medskip
\begin{property}\label{theo_pos}
The eigenvectors of $ \mathbf{A}$, $\mathbf{y}^{f_i}$, are 0-1 $N$-dimensional vectors with ones at
the positions that coincide with the positions of the elements
with frequency $f_i$ of the set $ \mathcal{A}$.
\end{property}

\begin{proof}\label{proof_pos}
It can be easily checked that
$$\mathbf{A}\mathbf{y}^{f_i}=f_i\mathbf{y}^{f_i},$$ i.e., both
vectors, $\mathbf{A}\mathbf{y}^{f_i}$ and $f_i\mathbf{y}^{f_i},$
have $f_i$-s at the positions of the elements with frequency $f_i$.
All other components are zeros.
\end{proof}

\medskip
For the sake of clarity, we give the eigenvector $\mathbf{y}^{f_1}$ in its explicit form. Since $f_1$ is a frequency of some element  of
$\mathcal{A}$, there exist $k_1, k_2, \ldots, k_{f_1}
\in\{0,1,\ldots, N-1\}$, $ k_1<k_2<\ldots<k_{f_1}$ such that $$
x_{k_1}=x_{k_2}=\ldots=x_{k_{f_1}}.$$ The components of the vector
$\mathbf{y}^{f_1}$ are $$ y^{f_1}_{i}=\left\{\begin{array}{rl}1, &
i= k_1, k_2,\ldots, k_{f_1}\\0, & \hbox{\rm
otherwise}\end{array}\right.\,.$$ and
the components of the vector $\mathbf{A}\mathbf{y}^{f_1}=f_1
\mathbf{y}^{f_1}$ are $$
[\mathbf{A}\mathbf{y}^{f_1}]_i=\left\{\begin{array}{rl}f_1, & i=
k_1, k_2,\ldots, k_{f_1}\\0, & \hbox{\rm
otherwise}\end{array}\right.\,. $$ In
general, the eigenvector $ \mathbf{y}^{f_i}$ gives information
about the ordinal numbers of the  elements (inputs) that have frequency
$f_i$. At those positions $\mathbf{y}^{f_i}$ has ones. Other
elements of $\mathbf{y}^{f_i}$ are zeros.

{\it Remark:} Note that the rows (columns) of $ \mathbf{A}$ are its eigenvectors.


\medskip
\begin{ex}
For the matrix from the Example~\ref{ex_matrix_build}, 
$f_1=2$, $f_2=f_3=1$, $ \rho(\mathbf{A})=2=f_1.$  The eigenpairs
of $\overline{\mathbf{A}}$ for the frequencies $ f_1, f_2, f_3$
are: $(2, \mathbf{y}_1=[1\,\,1\,\,0\,\,0]^T),$
$(1,\mathbf{y}_2=[0\,\,0\,\,1\,\,0]^T),$ and
$(1,\mathbf{y}_3=[0\,\,0\,\,0\,\,1]^T).$ The eigenpairs of
$\mathbf{A}$ for the frequencies $ f_1, f_2, f_3$ are:
$(2,\mathbf{y}^{f_1}=[1\,\,0\,\,1\,\,0]^T),$
$(1,\mathbf{y}^{f_2}=[0\,\,1\,\,0\,\,0]^T),  $ and
$(1,\mathbf{y}^{f_3}=[0\,\,0\,\,0\,\,1]^T),$ from which we read
the information that $x_0=x_2$ is an element with frequency
$f_1=2$, $x_1$ and $x_3$ are elements with frequencies
$f_2=f_3=1$.
\end{ex}

\medskip
So far we showed that all frequencies of the elements of $\mathcal{A}$ and zero are eigenvalues  of $ \mathbf{A}$. The question is, whether $\mathbf{A}$ has some other eigenvalues?  Before answering this question (the answer is actually the property \ref{theo_spektar}) we will compute the $n\times n$ determinants  $ D_n(s)$ and $F_n(s)$ defined for $n\in \mathbb{N}$ and $s\in \mathbb{R},$ by \be D_n(s)=\left|\begin{array}{rrrrrr}s&-1&-1&-1&\ldots&-1\\
-1&s&-1&-1&\dots&-1\\-1&-1&s&-1&\ldots&-1\\& & & &\ddots&
\\-1&-1&-1&-1&\ldots&s
\end{array}\right|\qquad \hbox{\rm and} \ee \be F_n(s)=\left|\begin{array}{rrrrrr}-1&-1&-1&-1&\ldots&-1\\
-1&s&-1&-1&\dots&-1\\-1&-1&s&-1&\ldots&-1\\& & & &\ddots&
\\-1&-1&-1&-1&\ldots&s
\end{array}\right|.\ee
We claim that
\be\label{eq_matind} D_n(s)=(s+1)^{n-1}(s-n+1),\ee
\be\label{eq_matind1} F_n(s)=-(s+1)^{n-1}.\ee

By cofactor expansion of both determinants along their first row, we get:
\be\label{dn} D_n(s)=sD_{n-1}(s)+(n-1)F_{n-1}(s),\ee\be\label{fn} F_n(s)=-D_{n-1}(s)+(n-1)F_{n-1}(s). \ee 
For $ n=1$, $ D_1=s$ and $ F_1=-1$, which corresponds to the formulas. Assuming that (\ref{eq_matind}) and (\ref{eq_matind1}) hold for $ n\in \mathbb{N}$ and taking into account (\ref{dn}) and (\ref{fn}), we obtain:
\begin{eqnarray*}
D_{n+1}(s)&=&sD_n(s)+nF_n(s)=\\ &=&s(s+1)^{n-1}(s-n+1)-n(s+1)^{n-1}=\\&=&(s+1)^n(s-n),\\
F_{n+1}(s)&=&-D_n(s)+nF_n(s)=\\
&=&-(s+1)^{n-1}(s-n+1)-n(s+1)^{n-1}=\\&=&-(s+1)^n,
\end{eqnarray*}
which, by means of mathematical induction, proves the formulas  (\ref{eq_matind}) and (\ref{eq_matind1}).

\medskip
\begin{property}\label{theo_spektar}$\sigma (\mathbf{A})=\{0, f_1, f_2, \ldots, f_m\}$, i.e., the spectrum of $\mathbf{A} $ matrix consists of $ 0$ and
all frequencies of the elements of $ \mathcal{A}$.\end{property}

\begin{proof}\label{proof_spektar}
We use the notation $(\lambda \mathbf{I}-\mathbf{1})_{f_i} $ to indicate  $f_i\times f_i$ matrix
\be\label{eq_blok} (\lambda \mathbf{I}-
\mathbf{1})_{f_i}=\left[\begin{array}{rrrrrr}\lambda -1&-1&-1&-1&\ldots&-1\\
-1&\lambda-1&-1&-1&\dots&-1\\-1&-1&\lambda -1&-1&\ldots&-1\\& & &
&\ddots&
\\-1&-1&-1&-1&\ldots&\lambda -1
\end{array}\right].\ee

The $\lambda \mathbf{I}-\overline{\mathbf{A}}$ matrix is a block-diagonal matrix,  consisting of $m$ blocks of sizes $f_i\times f_i$ ($ i=\overline{1,m}$), of type~(\ref{eq_blok}). We give the characteristic polynomial of the $\overline{\mathbf{A}}$, i.e. $\mathbf{A}$ matrix,  in its explicit form, using  the formulas  (\ref{eq_matind}) and (\ref{eq_matind1}) (substituting $s=\lambda-1$, $n=f_i,\, i=\overline{1,m}$):
\begin{eqnarray*}
&&c_{\mathbf{A}}(\lambda)= \det (\lambda\mathbf{I}-\overline{\mathbf{A}})=\\
& &= \det(\lambda \mathbf{I}- \mathbf{1})_{f_1}\cdot \det(\lambda\mathbf{I}- \mathbf{1})_{f_2}\cdot \ldots \cdot \det(\lambda
\mathbf{I} - \mathbf{1})_{f_m}=\\
&&= D_{f_1}(\lambda-1)\cdot D_{f_2}(\lambda-1)\cdot\ldots\cdot
D_{f_m}(\lambda-1)=\\ &&=\lambda^{f_1-1}(\lambda -
f_1)\cdot\lambda^{f_2-1}(\lambda - f_2)\cdot\ldots\cdot\lambda^{f_m-1}(\lambda-f_m)=\\
&&=\lambda^{N-m}(\lambda-f_1)(\lambda-f_2)\ldots(\lambda-f_m).
\end{eqnarray*}
\end{proof}
The explicit form of the characteristic polynomial of the properly built matrix does not only give information about the spectrum of the $\mathbf{A}$ matrix, but also for the algebraic multiplicity of each eigenvalue. The importance of this fact will be elaborated later.

\subsection{Characteristics of an erroneously built matrix}\label{erroneously_matrix}
\medskip

In order to find the characteristic polynomial of the erroneously built matrix, we first define the $ n\times n$ determinant $ Q_n(s)$  (where $n\in \mathbb{N}$ and $s\in \mathbb{R}$) by:
\be Q_n(s)=\left|\begin{array}{rrrrrr}s&0&-1&-1&\ldots&-1\\
0&s&-1&-1&\dots&-1\\-1&-1&s&-1&\ldots&-1\\& & & &\ddots&
\\-1&-1&-1&-1&\ldots&s
\end{array}\right|.\ee Its value can be easily obtained by cofactor expansion along its first row, 
$$Q_n(s)=sD_{n-1}(s)+(n-2)s F_{n-2}(s), \quad\hbox{\rm or} $$
\be\label{eq_q} Q_n(s)=s(s+1)^{n-3}(s^2+(-n+3)s-n+4).\ee


\medskip
\begin{property}\label{theo_nontransitivity}
Let there exist three equal elements $x_i=x_j=x_k\in\mathcal{A}.$ If the following holds for the entries of the $\mathbf{A}$ matrix:
\be\label{eq_nontr}
A_{ij}=A_{ji}=1=A_{ik}=A_{ki}\,\wedge\, A_{jk}=0=A_{kj},\ee then it has a characteristic polynomial of the type
\begin{eqnarray*}
c_{\mathbf{A}}(\lambda)&=&\lambda^{N-m-2}(\lambda-1)(\lambda^2+(-f_l+1)\lambda-f_l+2)\cdot\\&
&\cdot(\lambda-f_1)\ldots(\lambda-f_{l-1})(\lambda-f_{l+1})\ldots(\lambda-f_m).
\end{eqnarray*}
\end{property}

\begin{proof}\label{proof_nontransitivity}
Note that, since there should exist at least three equal voter inputs to consider transitivity at all, the frequency $f_l$ should be greater or equal to three. Let the elements $ x_i=x_j=x_k$ have the frequency $f_l, \,\,f_l\in
\{f_1, f_2,\dots,f_m\}$. Let the  $\mathbf{A}$ matrix be erroneously
built, as described by (\ref{eq_nontr}). Then, there
exists a $\mathbf{P}$ matrix such that
$\overline{\mathbf{A}}=\mathbf{P}^{-1}\cdot
\mathbf{A}\cdot\mathbf{P},$ where $\mathbf{P}$ is a product of a finite number of permutation matrices, $
\overline{\mathbf{A}}=diag\{\mathbf{1}_{f_1}, \mathbf{1}_{f_2},
\ldots, \mathbf{1}_{f_{l-1}}, \mathbf{Q}, \mathbf{1}_{f_{l+1}},
\ldots, \mathbf{1}_{f_{m}}\}$ and $\mathbf{Q}$ is the $f_l\times
f_l$ matrix
$$\mathbf{Q}=\left[\begin{array}{rrrrrr}1&0&1&1&\ldots&1\\
0&1&1&1&\dots&1\\1&1&1&1&\ldots&1\\& & & &\ddots&
\\1&1&1&1&\ldots&1
\end{array}\right]\,.$$  Then the matrix $\lambda
\mathbf{I}-\overline{\mathbf{A}}$ is a block-diagonal matrix
consisting  of blocks of type (\ref{eq_blok}) for all $ f_i\in
\{f_1, f_2, \ldots, f_m\}\setminus\{f_l\}$, and the $f_l\times
f_l$ block $\lambda\mathbf{I}-\mathbf{Q}$. Thus, the
characteristic polynomial of $ \mathbf{A}$ is:
\begin{eqnarray*}
& &c_{\mathbf{A}}(\lambda)= \det (\lambda
\mathbf{I}-\overline{\mathbf{A}})=\\
&&= \det(\lambda \mathbf{I}- \mathbf{1})_{f_1}\cdot \det(\lambda
\mathbf{I}- \mathbf{1})_{f_2}\cdot \ldots \cdot \det(\lambda
\mathbf{I} - \mathbf{1})_{f_{l-1}}\cdot\\ & &\cdot  \det
(\lambda\mathbf{I}-\mathbf{Q})\cdot \det(\lambda \mathbf{I} -
\mathbf{1})_{f_{l+1}}\cdot \ldots \cdot \det(\lambda \mathbf{I} -
\mathbf{1})_{f_{m}}=\\&&= D_{f_1}(\lambda-1)\cdot
D_{f_2}(\lambda-1)\cdot\ldots \cdot
D_{f_{l-1}}(\lambda-1)\cdot\\&&\cdot Q_{f_l}(\lambda-1)\cdot
D_{f_{l+1}}(\lambda-1)\cdot\ldots \cdot
D_{f_{m}}(\lambda-1).
\end{eqnarray*}
Using (\ref{eq_matind}) and (\ref{eq_q}) (with substitutions $ s=\lambda-1$, and $ n=f_i,\,$ $i=\overline{1,m}$) we obtain:
\begin{eqnarray*}
c_{\mathbf{A}}(\lambda)&=&\lambda^{N-m-2}(\lambda-1)(\lambda^2+(-f_l+1)\lambda-f_l+2)\cdot\\&
&\cdot(\lambda-f_1)\ldots(\lambda-f_{l-1})(\lambda-f_{l+1})\ldots(\lambda-f_m).\end{eqnarray*}
\end{proof}

\medskip
\begin{corollary}\label{cor_notnat} If the $\mathbf{A}$ matrix is erroneously built, then it has two non-integer eigenvalues.\end{corollary}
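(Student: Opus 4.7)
The plan is to read off the non-integer eigenvalues from the explicit characteristic polynomial provided by Property~\ref{theo_nontransitivity}. Recall that that polynomial is
\[
c_{\mathbf{A}}(\lambda)=\lambda^{N-m-2}(\lambda-1)\bigl(\lambda^2+(1-f_l)\lambda+(2-f_l)\bigr)\prod_{i\ne l}(\lambda-f_i).
\]
All factors except the quadratic $q(\lambda)=\lambda^2+(1-f_l)\lambda+(2-f_l)$ contribute only integer roots, namely $0$, $1$, and the remaining frequencies $f_i$. Therefore all non-integer eigenvalues of $\mathbf{A}$ must come from $q(\lambda)$, so it suffices to show that both roots of $q$ are non-integer. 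Because $q$ has degree two, this will also give exactly two such eigenvalues.

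Next I would compute the discriminant of $q$. A short calculation gives
\[
\Delta=(1-f_l)^2-4(2-f_l)=f_l^2+2f_l-7=(f_l+1)^2-8.
\]
Since Property~\ref{theo_nontransitivity} is invoked only when transitivity can actually be violated, we have $f_l\ge 3$, so $\Delta\ge 8>0$ (confirming two real roots). The roots of $q$ are integers if and only if $\Delta$ is a perfect square of a nonnegative integer (and of the appropriate parity); so the task reduces to showing $(f_l+1)^2-8$ is never a perfect square for $f_l\ge 3$.

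The main (and only mildly delicate) step is this Diophantine check. Suppose $(f_l+1)^2-8=k^2$ for some integer $k\ge 0$. Then
\[
(f_l+1-k)(f_l+1+k)=8,
\]
with both factors positive integers of the same parity summing to $2(f_l+1)$. The only same-parity factorizations of $8$ are $(2,4)$ and $(4,2)$, both of which force $f_l+1=3$, i.e.\ $f_l=2$, contradicting $f_l\ge 3$. Hence $\Delta$ is not a perfect square, so $\sqrt{\Delta}$ is irrational, and the two roots $\tfrac{1}{2}\bigl((f_l-1)\pm\sqrt{\Delta}\bigr)$ of $q$ are irrational, in particular non-integer. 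Combining this with the first paragraph, $\mathbf{A}$ has exactly the two non-integer eigenvalues supplied by $q$, which is what we wanted to prove. The only real subtlety is the parity/factorization argument; everything else is immediate from Property~\ref{theo_nontransitivity}.
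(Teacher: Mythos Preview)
Your proof is correct and follows essentially the same approach as the paper: both isolate the quadratic factor $\lambda^2+(1-f_l)\lambda+(2-f_l)$ from Property~\ref{theo_nontransitivity}, compute its discriminant $f_l^2+2f_l-7$, and show this is never a perfect square for $f_l\ge 3$. The only cosmetic difference is in that last step---the paper sandwiches $f_l^2+2f_l-7$ strictly between the consecutive squares $f_l^2$ and $(f_l+1)^2$ for $f_l\ge 4$ and checks $f_l=3$ by hand, whereas you run the equivalent Diophantine factorization $(f_l+1-k)(f_l+1+k)=8$.
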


\begin{proof}
The roots of the characteristic polynomial are  1, all frequencies except $f_l$,  then 0 (if $N-m>2$), and the scalars $$\lambda_{1/2}=\frac{f_l-1\pm\sqrt{f_l^2+2f_l-7}}{2}.$$ The last eigenvalues are non-integers, since $\sqrt{f_l^2+2f_l-7}$ is non-integer for $f_l\geq3$. We certify this with the inequality $$f_l^2<f_l^2+2f_l-7<(f_l+1)^2,$$ that holds for $f_l\geq 4,$ and $ f_l^2+2f_l-7=8,$ for $ f_l=3.$
\end{proof}

\medskip
Corollary~\ref{cor_notnat}  shows another way to  the voter how to do self-checks. Another useful fact in this direction is that the zero eigenvalue has algebraic multiplicity $N-m-2$ for an erroneously built matrix, opposed to the algebraic multiplicity $N-m$ for a properly built matrix. Similarly, for a properly built matrix, the eigenvalue 1 has algebraic multiplicity equal to the number of inputs with frequency 1 (including inactive inputs). For an erroneously built matrix, the algebraic multiplicity of 1 is bigger than this number for 1.


\medskip
\begin{ex}\label{pr5}
If $x_0=x_1=x_3\not=x_2$ ($N=4, m=2$, $ f_1=3, f_2=1$) and the corresponding matrix $\mathbf{B} $ is erroneously built, $$
\mathbf{B}=\left[\begin{array}{cccc}
1&1&0&0\\1&1&0&1\\0&0&1&0\\0&1&0&1\end{array}\right] $$ (the matrix implies that $ x_0=x_1, x_1=x_3, x_0\not=x_3$), then
$$\overline{\mathbf{B}}=\left[\begin{array}{cccc}
1&0&1&0\\0&1&1&0\\1&1&1&0\\0&0&0&1\end{array}\right]=
\mathbf{P}_{23}^{-1}\mathbf{P}_{13}^{-1}\mathbf{A}\mathbf{P}_{13}\mathbf{P}_{23}$$
and $$c_{\mathbf{B}}(\lambda)=(\lambda-1)^2(\lambda^2-2\lambda-1).$$
The eigenvalues are $ \lambda_1=1$ (since $ f_2=1$),  $\lambda_2=1$ (since $ 1$ is always an eigenvalue of an erroneously built matrix, see Property~\ref{theo_nontransitivity}) and two non-integer eigenvalues $ \lambda_{3/4}=1\pm\sqrt2.$
\end{ex}

\medskip
\begin{ex}
If $ x_0=x_1=x_2=x_3$ ($N=4, m=1$, $ f_1=4$) and the corresponding matrix $\mathbf{B}$ is erroneously built,
$$\mathbf{B}=\left[\begin{array}{cccc}
1&1&1&0\\1&1&1&1\\1&1&1&1\\0&1&1&1\end{array}\right] $$ (the matrix implies that $ x_0= x_2, x_2=x_3, x_0\not=x_3$), then
$$\overline{\mathbf{B}}=\left[\begin{array}{cccc}
1&0&1&1\\0&1&1&1\\1&1&1&1\\1&1&1&1\end{array}\right]=
\mathbf{P}_{13}^{-1}\mathbf{A}\mathbf{P}_{13}$$ and $$
c_{\mathbf{B}}(\lambda)=\lambda(\lambda-1)(\lambda^2-3\lambda-2).$$
The eigenvalues are $ \lambda_1=1$, with algebraic multiplicity 1; $\lambda_2=0$, with algebraic multiplicity $ N-m-2=4-1-2=1$ and two simple non-integer eigenvalues $
\lambda_{3/4}=\frac{3\pm\sqrt{17}}2.$
\end{ex}

\subsection{Matrix -- voter outputs relationship}\label{relations}
\medskip
At the end, we give the relations between the voter outputs (if the input set is $\mathcal{A}$) and the scalar characteristic of a properly built matrix $\mathbf{A}$ corresponding to the set $\mathcal{A}$. The output $y$ is actually the element $x_{k_1}=x_{k_2}=\ldots = x_{k_{f_{1}}}$ (recall the comment after Property ~\ref{theo_pos}); $eq=f_1$ -- the largest eigenvalue of $\mathbf{A}$; $ d=N-f_1$; $ e_i$ are the components of the vector $\mathbf{y}^{f_1}$; the ambiguous signal $ a=0$, if $f_1$ is simple eigenvalue and $a=1$ if $alg\,mult_{\mathbf{A}}(f_1)>1.$ $err=1$ if there is a non-integer eigenvalue of $\mathbf{A}$, and $err=0$ if all eigenvalues are non-negative integers.

In other words, the eigenvalues of $\mathbf{A}$ answer the questions like ``What are the frequencies of the inputs?'', ``What is the output of the voter?'' or ``Is the matrix erroneously built?''.  The corresponding eigenvectors answer the question ``What are the positions of the equal inputs (with the corresponding frequency)?''. The multiplicity of the largest eigenvalue answers the question ``Is there an ambiguity between the inputs?". Examples~\ref{ex7} and \ref{ex8} illustrate these issues.

\medskip
\begin{ex} \label{ex7} For the $\mathbf{B}$ matrix from Example~\ref{pr5} (because of its non-integer eigenvalues), the value of the error signal is $1$, $ err=1.$ If it was properly built,
$$\mathbf{B}=\left[\begin{array}{cccc}
1&1&0&1\\1&1&0&1\\0&0&1&0\\1&1&0&1\end{array}\right], $$
then $$ c_{\mathbf{B}}(\lambda)=\lambda^2(\lambda-1)(\lambda-3)$$
and the eigenvectors of $\mathbf{B}$ matrix corresponding to $f_1=3$ and $f_2=1$, are $\mathbf{y}^{f_1}=[1\,1\,0\,1] $ and $\mathbf{y}^{f_2}=[0\,0\,1\,0].$
We obtain: \begin{eqnarray*}eq &=& f_1=3;\\
d& =&N-f_1=1;\\
y&=&x_0=x_1=x_3\,\,(\hbox{we read from}\,\,\mathbf{y}^{f_1});\\
e_0&=&1=e_1=e_3;\\ e_2&=&0;\\
a&=&0,\,\,\,\hbox{since}\,\,\, alg\,mult_{\mathbf{A}}(f_1)=1;\\
err&=&0.\end{eqnarray*}
\end{ex}

\medskip
\begin{ex} \label{ex8} If $ x_0=20=x_2$, $ x_1=x_3=30$, then the corresponding matrix  is
 $$
\mathbf{D}=\left[\begin{array}{cccc}
1&0&1&0\\0&1&0&1\\1&0&1&0\\0&1&0&1\end{array}\right]. $$
Its characteristic polynomial is
$$ c_{\mathbf{D}}(\lambda)=\lambda^2(\lambda-2)^2,$$ which means that its eigenvalues are $\lambda_1=2$
and $\lambda_2=0 $. The eigenvectors corresponding to the
frequency $ f_1=f_2=2$ are $ [1\,0\,1\,0]$ and $ [0\,1\,0\,1]$.
We obtain: \begin{eqnarray*}eq &=& f_1=2;\\
d& =&N-f_1=2;\\ a&=&1,\,\,\,\hbox{since}\,\,\, alg\,mult_{\mathbf{A}}(f_1)=2;\\
y&=&x_0=x_1\,\,\, \hbox{or}\,\,\,y=x_1 =x_3;\\
e_0&=&1=e_2,\,\, e_1=0=e_3,\,\, \hbox{or},\\ e_0&=&0=e_2,\,\, e_1=1=e_3;\\
 err&=&0.\end{eqnarray*}
\end{ex}

\section{Conclusion} \label{sec_conclusion}

Outlining our motivation in Section~\ref{sec_introduction} we gave several examples of sophisticated dependable NMR systems. They actually led us to a design method for programmable NMR voters that self-report their state and self-check their operation. The method is based on a binary matrix, which enables simplicity and scalability of the voter design. We got experimental results that foreshadowed interesting matrix properties, which in this paper were shown to be true by rigorous mathematical proofs.  We characterized the design method through the most important matrix characteristics -- the eigenvalues and eigenvectors. All exposed, theoretically-proven characteristics of the method enhance its possibilities in different applications. Although in hardware-realized NMR systems $N$ is usually in the range from two to eight, in this paper we showed that the method is general and can be used to construct NMR systems for any natural number $N$. A software realization for large $N$ is given in~\cite{Simevski2013b}.


\begin{thebibliography}{10}

\bibitem{Simevski2012a}
A.~Simevski, E.~Hadzieva, R.~Kraemer, and M.~Krstic.
\newblock Scalable design of a programmable nmr voter with inputs' state
  descriptor and self-checking capability.
\newblock In {\em Adaptive Hardware and Systems (AHS), 2012 NASA/ESA Conference
  on}, pages 182--189, June 2012.

\bibitem{Simevski2014a}
A.~Simevski, R.~Kraemer, and M.~Krstic.
\newblock {Investigating core-level N-modular redundancy in multiprocessors}.
\newblock In {\em International Symposium on Embedded Multicore/Many-core
  Systems-on-Chip (MCSoC-14) , 2014 IEEE 8th International Symposium on},
  September 2014.

\bibitem{Simevski2013a}
A.~Simevski, R.~Kraemer, and M.~Krstic.
\newblock Register-transfer level nmr system generator.
\newblock In {\em Zuverlässigkeit und Entwurf - 7. ITG/GI/GMM-Fachtagung}. VDE
  Verlag GmbH - Berlin - Offenbach, September 2013.

\bibitem{Jiang1999}
Jianhui Jiang, Hongbao Shi, and Xiaodong Zhao.
\newblock A novel nmr structure with concurrent output error location
  capability.
\newblock In {\em Dependable Computing, 1999. Proceedings. 1999 Pacific Rim
  International Symposium on}, pages 32 --39, 1999.

\bibitem{Gaitanis1988}
N.~Gaitanis.
\newblock The design of totally self-checking tmr fault-tolerant systems.
\newblock {\em Computers, IEEE Transactions on}, 37(11):1450 --1454, nov 1988.

\bibitem{Takaesu2004}
Kohtaro Takaesu and Takeo Yoshida.
\newblock Construction of a fault-tolerant voter for n-modular redundancy.
\newblock {\em Electronics and Communications in Japan (Part II: Electronics)},
  87:62–71, December 2004.

\bibitem{Lo1990}
H.-Y. Lo, L.-P. Ju, and C.-C. Su.
\newblock General version of reconfiguration n modular redundancy system.
\newblock {\em Circuits, Devices and Systems, IEE Proceedings G}, 137(1):1 --4,
  feb 1990.

\bibitem{Ruiz2008}
J.-C. Ruiz, D.~de~Andres, S.~Blanc, and P.~Gil.
\newblock Generic design and automatic deployment of nmr strategies on hw
  cores.
\newblock In {\em Dependable Computing, 2008. PRDC '08. 14th IEEE Pacific Rim
  International Symposium on}, pages 265 --272, dec. 2008.

\bibitem{Srihari1982}
Sargur~N. Srihari.
\newblock Reliability analysis of biased majority-vote systems.
\newblock {\em Reliability, IEEE Transactions on}, R-31(1):117 --118, april
  1982.

\bibitem{Koren1979}
I.~Koren and S.Y.H. Su.
\newblock Reliability analysis of n-modular redundancy systems with
  intermittent and permanent faults.
\newblock {\em Computers, IEEE Transactions on}, C-28(7):514 --520, july 1979.

\bibitem{Beaudry1978}
M.D. Beaudry.
\newblock Performance-related reliability measures for computing systems.
\newblock {\em Computers, IEEE Transactions on}, C-27(6):540 --547, june 1978.

\bibitem{Al-Hashimi2001}
M.~Al-Hashimi, H.H. Pu, N.~Park, and F.~Lombardi.
\newblock Dependability under malicious agreement in n-modular
  redundancy-on-demand systems.
\newblock In {\em Network Computing and Applications, 2001. NCA 2001. IEEE
  International Symposium on}, pages 80 --91, 2001.

\bibitem{Lombardi2001}
F.~Lombardi, N.~Park, M.~Al-Hashimi, and H.H. Pu.
\newblock Modeling the dependability of n-modular redundancy on demand under
  malicious agreement.
\newblock In {\em Dependable Computing, 2001. Proceedings. 2001 Pacific Rim
  International Symposium on}, pages 68 --75, 2001.

\bibitem{Parhami1994}
B.~Parhami.
\newblock Voting algorithms.
\newblock {\em Reliability, IEEE Transactions on}, 43(4):617 --629, dec 1994.

\bibitem{Simevski2013b}
A.~Simevski and E.~Hadzieva.
\newblock Software implementation of programmable nmr voters.
\newblock In {\em Electronics, Telecommunications, Automatics and Informatics
  (ETAI), 2013 XI international conference on}, September 2013.

\bibitem{Meyer2000}
Carl~D. Meyer.
\newblock {\em {Matrix analysis and applied linear algebra}}.
\newblock SIAM, April 2000.

\end{thebibliography}
\end{document}